\documentclass[11pt]{article}
\usepackage[a4paper,margin=28mm]{geometry}
\usepackage[T1]{fontenc}
\usepackage{lmodern}
\usepackage{amsmath,amssymb,amsthm,graphicx,booktabs,tabularx,microtype,xcolor,hyperref}
\usepackage[numbers,sort&compress]{natbib}
\hypersetup{colorlinks=true,linkcolor=blue!45!black,citecolor=blue!45!black,urlcolor=blue!55!black}
\newtheorem{theorem}{Theorem} \newcommand{\dd}{\mathrm d}  \title{Finite-Depth Null Capture and Mass-History Sensitivity in Vaidya Spacetime}
\author{ Jorge Luis Palamara \thanks{Corresponding author: \href{mailto:jlpalamara@alumni.caltech.edu}{jlpalamara@alumni.caltech.edu}; Independent Researcher, Sacramento, California, USA; ORCID: \href{https://orcid.org/0009-0001-3175-4425}{0009-0001-3175-4425}.} \and Jorge R. Palamara \thanks{\href{mailto:jpala@ucla.edu}{jpala@ucla.edu}; Independent Researcher, Los Angeles, California, USA; ORCID: \href{https://orcid.org/0009-0004-2388-8003}{0009-0004-2388-8003}.} }
\date{}

\begin{document}
\maketitle

\begin{abstract}
We formulate a finite-depth endpoint map for past-directed null rays in dynamical black-hole spacetimes, retaining an unresolved label when no terminal is reached before a prescribed optical depth. For regular transverse terminal hits, common-flow, transversality, and separation hypotheses make the regular decision sets open and the endpoint label locally constant. In Schwarzschild spacetime, the resulting finite-depth capture circle converges to the standard critical circle as $\exp[-L/(3M)]$ when the cutoff is measured in static-observer optical depth. We then apply the construction to ingoing Vaidya spacetime. Two $C^4$ accretion profiles that coincide on a neighborhood of the observation event, and therefore have identical local mass and metric jets there, nevertheless produce distinct finite-depth critical angles. For the baseline protocol,
\[
\Delta_{\mathrm{hist}}\alpha_{45}=0.0819062\,\epsilon-0.0231767\,\epsilon^2+O(\epsilon^3).
\]
A backward adjoint gives an explicit causal kernel representing the first-order response to the mass history. The kernel is broad, peaks away from the photon-sphere dwell region, and changes between finite backgrounds even where their instantaneous masses agree. The quadratic coefficient is independently confirmed by centered nonlinear solutions and Richardson extrapolation. These results separate instability-controlled cutoff convergence from globally transported mass-history sensitivity.
\end{abstract}

\noindent\textbf{Keywords:} null geodesics; Vaidya spacetime; dynamical black holes; endpoint map; adjoint sensitivity; photon sphere.

\section{Introduction}
Black-hole shadows, strong-deflection lensing, and photon surfaces organize critical null propagation in stationary geometries \citep{Virbhadra2000,Falcke2000,Bozza2001,Perlick2022,Claudel2001,Cao2021,Song2023}. In dynamical settings, the Vaidya solution and its global null geometry provide a standard model of spherical mass evolution \citep{Vaidya1951,Faraoni2021,Coudray2021,Vertogradov2023}. Evolving photon spheres and shadows have been studied in Vaidya-type models, including Bonnor--Vaidya and collapsing Vaidya geometries \citep{Heydarzade2024,Roy2024}. Evolution-equation, exact conformally static, and globally defined Vaidya photon-sphere and shadow constructions are represented particularly closely by Refs.~\citep{DynamicUnderstanding,VaidyaExact,Koga2022}. These works identify dynamical critical null structures or asymptotic shadow edges. Recent generalized-Vaidya work has also examined self-similar and quasistatic photon surfaces and critical impact parameters in genuinely time-dependent backgrounds~\citep{VertogradovOvgun2026}. The present construction instead treats the critical angle as a finite-depth endpoint functional and differentiates that functional with respect to the prior mass history.

Throughout $G=c=1$, the metric signature is $(-,+,+,+)$, and the celestial sphere carries the topology induced by the positive-definite observer rest-space metric. The contribution claimed is not the general statement that accretion changes a Vaidya shadow. It is the combination of: (i) a Lorentzian finite-depth endpoint map; (ii) comparison of histories with the same mass function on a neighborhood of observation; (iii) functional differentiation of the implicit critical angle with respect to mass history; and (iv) an adjoint kernel and independently validated second-order response.

The construction is not restricted to the particular Vaidya protocol considered below. Any null-propagation problem with observer-normalized initial directions, regular terminal sets, and a monotone finite-depth parameter admits an analogous endpoint and sensitivity formulation. Potential applications include dynamical lensing, time-dependent compact-object scattering, rotating numerical spacetimes, and endpoint classifications for which asymptotic ray tracing is unavailable or operationally inappropriate.

\section{Finite-Depth Endpoint Map}
Let $(\mathcal M,g)$ be time-oriented and $C^r$, $r\ge2$. At $p$, let $u$ be future-directed and unit timelike, and define
\[
\mathbb S_p=\{n\in T_p\mathcal M:g(n,u)=0,\ g(n,n)=1\}.
\]
For $n\in\mathbb S_p$, let $\eta_n(0)=p$ and $\dot\eta_n(0)=-u+n$. The affine parameter increases toward the past along the chosen past-directed ray. Let $U\in C^{r-1}$ be future-directed and unit timelike on an open set $\mathcal O$ containing all considered segments, and set
\[
\ell(n,\lambda)=\int_0^\lambda g(U,\dot\eta_n)\,\dd s.
\]
Let $\mathcal W_A\subset\mathcal O$, $A=1,\ldots,N$, be embedded $C^{r-1}$ hypersurfaces on their regular pieces. Corners, terminal intersections, and simultaneous hits are excluded from the regular sets and assigned to the discriminant. Assume that the terminal family is finite in each common flow neighborhood, or more generally that its union is relatively closed there.

For a ray that has an attained first positive hit on $\mathcal W_A$, define
\[
\lambda_A(n)=\min\{\lambda>0:\eta_n(\lambda)\in\mathcal W_A\},
\qquad D_A(n)=\ell(n,\lambda_A(n)).
\]
The minimum notation emphasizes that nonattained infima, incomplete flow segments, corners, grazing events, and simultaneous first hits are not regular outcomes. A direction belongs to $\mathcal R_A(L)$ when $\mathcal W_A$ is its unique first terminal, the hit occurs on a regular hypersurface piece and is transverse, and $D_A(n)<L$. A direction belongs to $\mathcal R_{\mathsf U}(L)$ when the ray reaches a unique parameter $\sigma_L(n)$ satisfying $\ell(n,\sigma_L)=L$ before any terminal and remains terminal-free through $[0,\sigma_L+\delta]$ for some $\delta>0$. Set
\[
\Omega_L=\mathcal R_{\mathsf U}(L)\cup\bigcup_{A=1}^{N}\mathcal R_A(L),
\]
and define the finite-depth endpoint map
\[
\Phi_L:\Omega_L\longrightarrow\{1,\ldots,N,\mathsf U\},\qquad
\Phi_L(n)=\begin{cases}
A,&n\in\mathcal R_A(L),\\
\mathsf U,&n\in\mathcal R_{\mathsf U}(L).
\end{cases}
\]
Its complement $\mathfrak D_L=\mathbb S_p\setminus\Omega_L$ is the discriminant.

\begin{theorem}[Regular first-hit stability]
Fix $n_0\in\mathbb S_p$. Assume: (i) a celestial neighborhood $V$, numbers $\lambda_A,\delta_\lambda>0$, and a common $C^{r-1}$ geodesic-flow domain on $V\times[0,\lambda_A+\delta_\lambda]\subset\mathcal O$; (ii) a unique first hit $q_A=\eta_{n_0}(\lambda_A)\in\mathcal W_A$ and no earlier terminal; (iii) $\mathcal W_A=\{h_A=0\}$ near $q_A$, with $\dd h_A\ne0$ and
\[
\dd h_A|_{q_A}(\dot\eta_{n_0}(\lambda_A))\ne0;
\]
(iv) there exists $\varepsilon\in(0,\lambda_A)$ such that $\eta_{n_0}([0,\lambda_A-\varepsilon])$ has positive distance, in an auxiliary Riemannian metric on $\mathcal O$, from every terminal hypersurface, while within a common flow neighborhood of $\eta_{n_0}([\lambda_A-\varepsilon,\lambda_A+\delta_\lambda])$ the only terminal component present is the regular piece of $\mathcal W_A$ containing $q_A$; and (v) a strict margin $D_A(n_0)=\ell(n_0,\lambda_A)<L-\delta_L$. Then $n_0$ has a celestial neighborhood on which $\mathcal W_A$ remains the unique first terminal and its hit depth remains below $L$. Hence $\mathcal R_A(L)$ is open.
\end{theorem}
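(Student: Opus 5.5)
The plan is to use continuity of the geodesic flow and the implicit function theorem to track the hit point, and then rule out earlier terminals by a compactness-plus-distance argument split into an early segment and a late segment. Let $F(n,\lambda)=\eta_n(\lambda)$. By hypothesis (i) this map is $C^{r-1}$ on $V\times[0,\lambda_A+\delta_\lambda]$, and it is jointly continuous in $(n,\lambda)$.

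\textbf{Step 1 (local hit function).} Set $H(n,\lambda)=h_A(\eta_n(\lambda))$, defined for $(n,\lambda)$ near $(n_0,\lambda_A)$. Then $H(n_0,\lambda_A)=0$ and
\[
\partial_\lambda H(n_0,\lambda_A)=\dd h_A|_{q_A}(\dot\eta_{n_0}(\lambda_A))\neq0
\]
by hypothesis (iii). The implicit function theorem gives a $C^{r-1}$ function $\hat\lambda(n)$ on a neighborhood $V_1\subset V$ of $n_0$, with $\hat\lambda(n_0)=\lambda_A$. It is the unique zero of $H(n,\cdot)$ in some interval $(\lambda_A-\varepsilon',\lambda_A+\varepsilon')$, where we take $\varepsilon'\le\varepsilon$. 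Shrinking $V_1$, transversality $\partial_\lambda H\ne0$ persists along $\hat\lambda$. The hit point $\eta_n(\hat\lambda(n))$ stays in the same regular piece of $\mathcal W_A$, because that piece is open in $\mathcal W_A$ and the hit point depends continuously on $n$.

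\textbf{Step 2 (no earlier terminal).} This is the main obstacle, since uniqueness of the first hit requires excluding hits on every $\mathcal W_B$, and on other pieces of $\mathcal W_A$, for $\lambda<\hat\lambda(n)$. We split $[0,\hat\lambda(n)]$ into two parts.

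On the early part $[0,\lambda_A-\varepsilon]$, hypothesis (iv) gives a distance $d>0$ from the compact set $\eta_{n_0}([0,\lambda_A-\varepsilon])$ to the union of the terminals. Uniform continuity of $F$ on the compact set $\bar V_2\times[0,\lambda_A-\varepsilon]$ then gives a neighborhood $V_2$ of $n_0$ on which $\eta_n$ stays within $d/2$ of the reference segment, and hence misses all terminals. Here the relative closedness of the terminal union is what makes the distance meaningful.

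On the late part $[\lambda_A-\varepsilon,\lambda_A+\delta_\lambda]$, shrink $V$ to $V_3$ so that $\eta_n$ of this interval stays inside the common flow neighborhood $\mathcal N$ of (iv). By (iv), the only terminal in $\mathcal N$ is the regular piece $\{h_A=0\}$, so any hit there is a zero of $H(n,\cdot)$. It remains to show that no zero occurs in $[\lambda_A-\varepsilon,\lambda_A-\varepsilon']$. Since $h_A(\eta_{n_0}(\lambda))\neq0$ on this compact interval (there is no earlier hit for $n_0$), $|H|$ has a positive minimum there. By continuity, $H(n,\cdot)$ stays nonzero on that interval for $n$ near $n_0$. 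If $h_A$ is only locally defined, we first cover $\mathcal W_A\cap\mathcal N$ by the local defining function, shrinking $\mathcal N$ if needed.

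Combining the two parts, $\hat\lambda(n)$ is the first positive terminal parameter, so $\lambda_A(n)=\hat\lambda(n)$. The hit is also unique: no other $\mathcal W_B$ meets $\mathcal N$, so there are no simultaneous hits. It is regular and transverse by Step 1.

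\textbf{Step 3 (depth margin).} The integrand $g(U,\dot\eta_n)$ is continuous in $(n,s)$, and $\hat\lambda$ is continuous. Hence $D_A(n)=\int_0^{\hat\lambda(n)}g(U,\dot\eta_n)\,\dd s$ is continuous in $n$. By (v), $D_A(n)<L$ on some neighborhood $V_4$ of $n_0$.

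Take the intersection $V_1\cap V_2\cap V_3\cap V_4$. Every direction in it lies in $\mathcal R_A(L)$. Since $n_0\in\mathcal R_A(L)$ was arbitrary, and hypotheses (i)–(v) hold at each point of $\mathcal R_A(L)$ by its definition, $\mathcal R_A(L)$ is open.
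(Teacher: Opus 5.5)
Your proposal is correct and follows the paper's sketch: the implicit-function theorem applied to $h_A(\eta_n(\lambda))=0$, compact separation to exclude earlier and competing hits, the isolation clause in (iv) to keep the regular piece of $\mathcal W_A$ the only terminal near the hit, and continuity of the depth to preserve the cutoff margin. The one loose point is the intermediate interval $[\lambda_A-\varepsilon,\lambda_A-\varepsilon']$, where $h_A$ (given only near $q_A$) need not be defined, so you cannot rely on a minimum of $|H|$ there; instead use the same compact-separation argument as on your early segment, which works because that piece of ray misses the terminal piece and the terminal piece is relatively closed in $\mathcal N$.
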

\begin{proof}[Proof sketch]
Apply the implicit-function theorem to $h_A(\eta_n(\lambda))=0$. Compact separation away from the hit neighborhood excludes earlier and competing hits. The local terminal neighborhood contains only $\mathcal W_A$, while the implicit-function theorem continues its transverse intersection with the geodesic family. Continuity of the depth preserves the cutoff margin.
\end{proof}
For a regular unresolved direction, relative closedness of the terminal union and compactness of the terminal-free segment $[0,\sigma_L+\delta]$ give positive separation. The same argument applied to $\ell-L=0$, using $\partial_\lambda\ell>0$, makes $\mathcal R_{\mathsf U}(L)$ open. Therefore $\Omega_L$ is open, $\Phi_L$ is locally constant, and
\[
\partial_{\mathbb S_p}\mathcal R_A(L)\subseteq\mathfrak D_L.
\]
The closedness of $\mathfrak D_L$ is definitional; openness and local constancy are the substantive claims. Figure~\ref{fig:endpoint} summarizes the endpoint experiment.

\begin{figure}[!htbp]
\centering
\includegraphics[width=.88\linewidth]{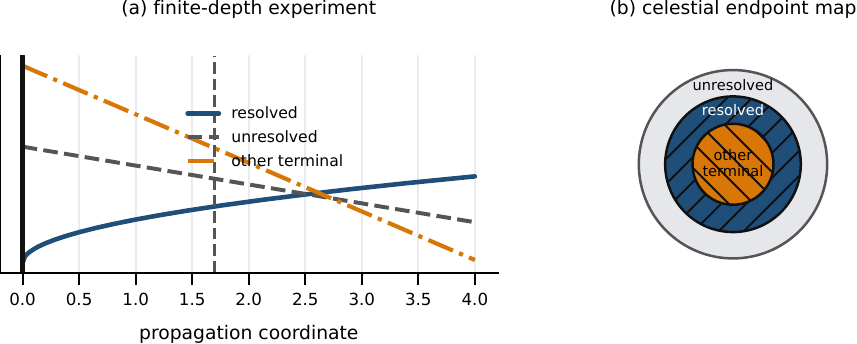}
\caption{Finite-depth endpoint construction. Regular transverse terminal hits and regular unresolved directions are locally stable; simultaneous, grazing, corner, and incomplete-flow events are assigned to the discriminant. The decision regions are distinguished by hatching, boundary styles, and direct labels.}
\label{fig:endpoint}
\end{figure}

\section{Schwarzschild Baseline}
In spherical symmetry, the terminal-resolved basin is rotationally symmetric, so its celestial boundary is a circle. For $f=1-2M/r$, a static observer at $r_o>3M$, terminal $2M<r_{\rm in}<3M$, and $U=f^{-1/2}\partial_t$,
\[
\mathcal L(b)=\int_{r_{\rm in}}^{r_o}\frac{\dd r}{\sqrt f\sqrt{1-b^2f/r^2}}.
\]
With $b_c=3\sqrt3M$, $\mathcal L$ is continuous and strictly increasing on $[0,b_c)$ and diverges at $b_c$. For $L\ge\mathcal L(0)$ there is a unique $b_L\in[0,b_c)$, with $b_L=0$ at equality; for $L>\mathcal L(0)$, one has $b_L\in(0,b_c)$. Writing $\delta=1-b^2/b_c^2$, a near/far decomposition gives
\[
\mathcal L(b)=-3M\ln\delta+C(r_o,r_{\rm in})+o(1).
\]
Thus
\[
1-\frac{b_L^2}{b_c^2}=A e^{-L/(3M)}[1+o(1)],\qquad A=e^{C/(3M)},
\]
and, for fixed $r_o>3M$, $\alpha_c-\alpha_L=K e^{-L/(3M)}[1+o(1)]$. The exponent is the photon-orbit instability rate expressed in the same depth parameter \citep{Cardoso2009}. Figure~\ref{fig:schw} shows the analytic behavior.

\begin{figure}[!htbp]
\centering
\includegraphics[width=.88\linewidth]{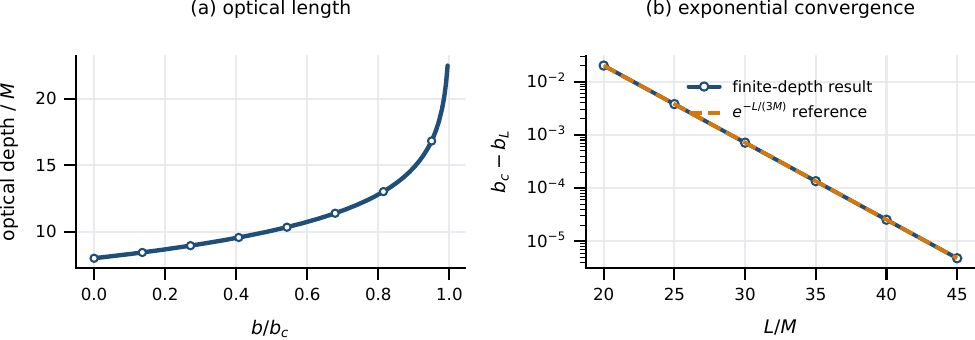}
\caption{Schwarzschild benchmark. The optical length grows monotonically below $b_c$, and the finite-depth critical impact parameter converges with the predicted $e^{-L/(3M)}$ rate. The numerical sequence and reference law are distinguished by markers and line style.}
\label{fig:schw}
\end{figure}

\section{Vaidya Endpoint Protocol}
We use
\[
\dd s^2=-f\dd v^2+2\dd v\dd r+r^2\dd\Omega^2,\qquad f=1-2m(v)/r,
\]
with Hamiltonian
\[
H=p_vp_r+\tfrac12fp_r^2+\frac{j^2}{2r^2}=0.
\]
The equations are
\[
\dot v=p_r,\quad\dot r=p_v+fp_r,\quad\dot p_v=\frac{m_v}{r}p_r^2,
\quad\dot p_r=-\frac{m}{r^2}p_r^2+\frac{j^2}{r^3},\quad\dot\phi=\frac{j}{r^2}.
\]
At observation,
\[
e_{\hat0}=f_o^{-1/2}\partial_v,\quad e_{\hat r}=f_o^{-1/2}\partial_v+f_o^{1/2}\partial_r,
\quad e_{\hat\phi}=r_o^{-1}\partial_\phi.
\]
The angle is measured from inward radial, so
\[
k=-e_{\hat0}-\cos\alpha\,e_{\hat r}+\sin\alpha\,e_{\hat\phi},
\]
and
\[
p_v(0)=\sqrt{f_o},\quad p_r(0)=-\frac{1+\cos\alpha}{\sqrt{f_o}},\quad j=r_o\sin\alpha.
\]
The sole terminal in the Vaidya calculation is the inward crossing $r=r_{\rm in}$ with $\dot r<0$; the ``other terminal'' in Fig.~\ref{fig:endpoint} illustrates only the general construction. Rays that turn beforehand are continued and are unresolved if $\ell=L$ occurs first. On the captured branch,
\[
F_{\rm root}(\alpha;m)=\ell_{\rm in}(\alpha;m)-L,
\quad r(\lambda_{\rm in})=r_{\rm in}.
\]
The selected critical angle is the capture-to-unresolved transition in $[0,\pi/2]$. For the Schwarzschild $L=45M_0$ reference, a 257-point categorical scan on $[0.25,0.65]$ rad, with spacing $0.0015625$ rad, found exactly one transition, bracketed by $[0.4828125,0.484375]$ rad. Forty-three bisection steps then reduced the bracket below $5\times10^{-14}$ rad. This is an empirical uniqueness check on the stated interval, not a global theorem. The $L=45M_0$ reference branch had $\min|\dot r|=2.89\times10^{-3}$ and $|D_\alpha|=1.09839\times10^6$.

We use $U=f^{-1/2}\partial_v$, hence $\dot\ell=p_v/\sqrt f$. The production domain obeys $r>2m(v)$. With $r_o=10M_0$, $r_{\rm in}=2.3M_0$, and $|\epsilon|\le0.08$, the following uniform analytical lower bound holds over the full protocol box:
\[
f\ge1-\frac{2(1.08)}{2.3}=0.06087>0.
\]
The profiles are
\[
m=M_0[1+\epsilon S(v;a,b)],\quad x=(v-a)/(b-a),
\]
\[
S=\begin{cases}0,&v\le a,\\126x^5-420x^6+540x^7-315x^8+70x^9,&a<v<b,\\1,&v\ge b,\end{cases}
\quad S_v:=\frac{\partial S}{\partial v}=\frac{630x^4(1-x)^4}{b-a}.
\]
This piecewise profile is $C^4$ at $v=a,b$.
The early interval is $[-20,-14]M_0$ and the late interval $[-14,-8]M_0$. The profiles coincide near $v_o=0$, so their local mass and Vaidya metric jets agree there. We define
\[
\Delta_{\rm hist}\alpha_L:=\alpha_L^{\rm early}-\alpha_L^{\rm late}.
\]
Positive values therefore mean that the early history produces the larger finite-depth critical angle. Figure~\ref{fig:protocol} shows the selected profile timing and endpoint geometry.

\begin{figure}[!htbp]
\centering
\includegraphics[width=.88\linewidth]{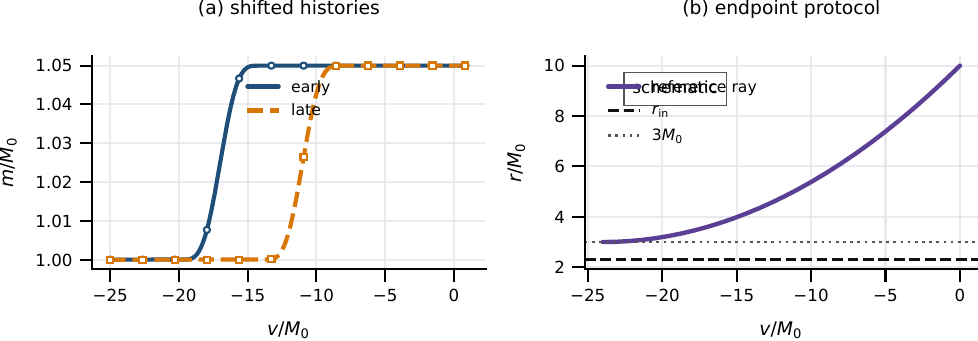}
\caption{Profiles and endpoint protocol for $r_o=10M_0$, $r_{\rm in}=2.3M_0$, early interval $[-20,-14]M_0$, late interval $[-14,-8]M_0$, and representative amplitude $\epsilon=0.05$. The early and late profiles use different line styles and markers. Panel (b) is schematic.}
\label{fig:protocol}
\end{figure}

\section{Tangent, Adjoint, and Second-Order Sensitivities}
For $z=(v,r,p_v,p_r,\phi,\ell)^T$, $\dot z=F(z;m)$ and
\[
\dot\xi_\beta=A\xi_\beta+B_\beta,\qquad A=D_zF.
\]
For $m=\bar m+\eta q$, we take $\eta$ to have dimensions of mass and $q(v)$ to be dimensionless; define $q_v:=\dd q/\dd v$. Then
\[
B_\eta=(0,-2qp_r/r,q_vp_r^2/r,-qp_r^2/r^2,0,p_vq/(rf^{3/2}))^T.
\]
At the terminal,
\[
D_\beta=\xi_\beta^\ell-\frac{\dot\ell}{\dot r}\xi_\beta^r,
\qquad \frac{\dd\alpha_L}{\dd\eta}=-D_\eta/D_\alpha.
\]
Thus $D_\beta$ is the total parameter derivative of the terminal optical depth after accounting for the parameter-dependent event time. In the $M_0=1$ normalization, $D_\alpha=\partial\ell_{\rm in}/\partial\alpha$ is quoted per radian; restoring units, it has dimensions of length. The adjoint satisfies $\dot\psi=-A^T\psi$ and
\[
\psi(\lambda_{\rm in})=(0,-\dot\ell/\dot r,0,0,0,1)^T.
\]
Write $B_\eta=bq+cq_v$, where
\[
b=(0,-2p_r/r,0,-p_r^2/r^2,0,p_v/(rf^{3/2}))^T,
\quad c=(0,0,p_r^2/r,0,0,0)^T.
\]
Before integration by parts,
\[
D_\eta=\psi(0)^T\xi_\eta(0)+\int_0^{\lambda_{\rm in}}(\psi^Tbq+\psi^Tcq_v)\,\dd\lambda.
\]
Since $\dot v=p_r<0$, define $g=\psi^Tc/p_r$. Then
\[
\frac{\dd\alpha_L}{\dd\eta}=-\frac{\psi(0)^T\xi_\eta(0)+[gq]_0^{\lambda_{\rm in}}}{D_\alpha}
+\int_{v_{\rm in}}^{v_o}K_L(v)q(v)\,\dd v,
\]
\[
K_L(v)=\left.\frac{\psi^Tb-\dd g/\dd\lambda}{D_\alpha p_r}\right|_{\lambda(v)}.
\]
For compactly supported functional perturbations the boundary term vanishes. The early-minus-late perturbation also vanishes near both endpoints. For fixed $\epsilon$, the early and late profiles induce the same observation tetrad because they have the same final mass. In an expansion about $\epsilon=0$, however, the common tetrad itself varies with $\epsilon$; its contribution cancels from the first-order early-minus-late difference but enters mixed and higher-order terms through the background-dependent flow. We plot
\[
\widehat K_L(\hat v)=M_0^2K_L(M_0\hat v),\qquad \hat v=\frac{v}{M_0}.
\]
Thus $K_L$ has dimensions of inverse length squared, and $\widehat K_L=M_0^2K_L$ is dimensionless.
We call the kernel causal in the operational sense that it has support only on the advanced-time interval sampled by the past-directed ray before the endpoint decision.

For second order, reparameterize the monotonic captured branch by $r$. For $x=(v,p_v,p_r,\phi,\ell)^T$ and parameters $p_1=\alpha,p_2=\epsilon$,
\[
X_i'=T_xX_i+T_{p_i},
\quad Y_{ij}'=T_xY_{ij}+T_{xx}[X_i,X_j]+T_{xp_i}X_j+T_{xp_j}X_i+T_{p_ip_j}.
\]
Symbolic derivatives are generated from the common radial flow. The implicit formulas are
\[
\alpha'=-D_\epsilon/D_\alpha,
\quad \alpha''=-\frac{D_{\epsilon\epsilon}+2D_{\alpha\epsilon}\alpha'+D_{\alpha\alpha}(\alpha')^2}{D_\alpha}.
\]

\section{Numerical Results and Reproducibility}
At $\epsilon=0.05$, $L=30M_0$,
\[
\alpha_{30}^{\rm early}=0.4943233,\quad\alpha_{30}^{\rm late}=0.4902946,
\quad\Delta_{\rm hist}\alpha_{30}=0.00402870\ {\rm rad}.
\]
The sign remains positive over $r_{\rm in}/M_0=2.2,2.3,2.4,2.5$ and $L/M_0=25,30,35$. At $L=45M_0$,
\[
\boxed{\Delta_{\rm hist}\alpha_{45}=0.0819062\epsilon-0.0231767\epsilon^2+O(\epsilon^3).}
\]
The quadratic approximation is within $0.037\%$ at $\epsilon=0.05$ and $0.094\%$ at $\epsilon=0.08$. Write $A_L(\epsilon):=\Delta_{\rm hist}\alpha_L(\epsilon)$. For an exponential tail $A_L=A_\infty-Ce^{-\kappa L}$ and equally spaced cutoffs with spacing $h_L=5M_0$, we eliminate $A_\infty$ and estimate
\[
\widehat\kappa_L=\frac{1}{h_L}\ln\left|\frac{A_L-A_{L-h_L}}{A_{L+h_L}-A_L}\right|.
\]
Evaluating $\widehat\kappa_{40}$ from $L/M_0=35,40,45$ gives $\widehat\kappa_{40}M_0=0.33311$, $0.33313$, $0.33311$, and $0.33314$ for $\epsilon=0.01$, $0.02$, $0.05$, and $0.08$, respectively. Figure~\ref{fig:cutoff}(b) plots the finite-reference difference $R_L(\epsilon)=|A_{45M_0}(\epsilon)-A_L(\epsilon)|$ for $L<45M_0$; it is not an estimate of $|A_\infty-A_L|$. This cutoff-rate comparison is a consistency diagnostic, not a Vaidya universality claim. Figures~\ref{fig:cutoff} and \ref{fig:amplitude} summarize the cutoff and amplitude sequences.

\begin{figure}[!htbp]
\centering
\includegraphics[width=.9\linewidth]{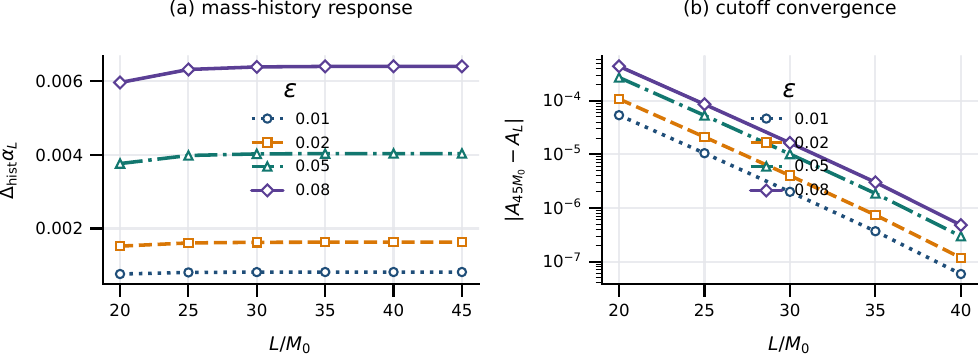}
\caption{Mass-history response and cutoff convergence for $r_o=10M_0$, $r_{\rm in}=2.3M_0$, the fixed early and late profile intervals, and amplitudes $\epsilon=0.01,0.02,0.05,0.08$. Panel (b) shows the finite-reference difference $R_L(\epsilon)=|A_{45M_0}(\epsilon)-A_L(\epsilon)|$ for $L<45M_0$; it is not an estimate of $|A_\infty-A_L|$. The four amplitudes are distinguished by line style and marker.}
\label{fig:cutoff}
\end{figure}

\begin{figure}[!htbp]
\centering
\includegraphics[width=.9\linewidth]{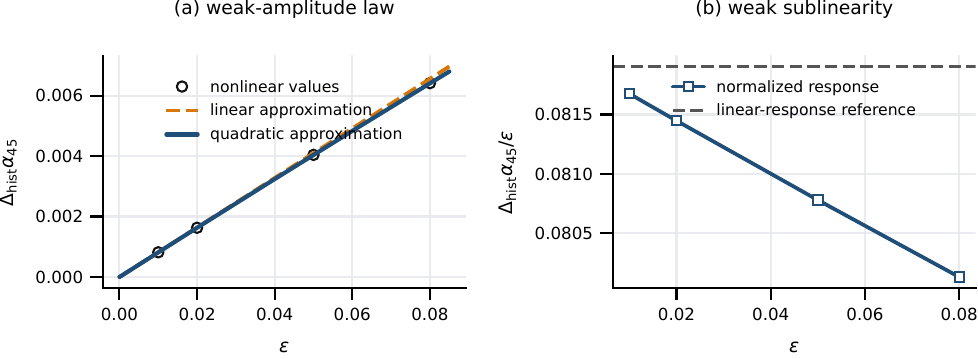}
\caption{Weak-amplitude response at $L=45M_0$. Numerical values, the linear approximation, and the quadratic approximation use distinct markers and line styles. The quadratic curve uses $c_1=0.0819062$ and $c_2=-0.0231767$.}
\label{fig:amplitude}
\end{figure}

Calculations used standard adaptive ODE techniques \citep{Hairer1993}. Second-order derivatives were generated symbolically with SymPy from the common radial flow. The implementation used Python 3.12.14, NumPy 1.26.4, SciPy 1.13.1, SymPy 1.14.0, IEEE double precision, and adaptive DOP853 integration. Root stopping tolerance was $5\times10^{-13}$ rad or tighter. Dense-output adjoint evaluation used at least 12,001 affine nodes. The negative-$\epsilon$ solutions are used only as an auxiliary analytic continuation for the centered second-derivative test. In ingoing Vaidya they correspond to $m_v<0$ on the transition interval and are not interpreted here as physical positive-energy accretion histories. The independent nonlinear check gave the values in Table~\ref{tab:centered}.
\begin{table}[!htbp]
\centering\small
\caption{Nonlinear early and late critical angles, in radians, used in the centered estimate of $c_2$.}
\label{tab:centered}
\begin{tabular}{c rr rr}\toprule
$h$&$\alpha_{\rm early}(-h)$&$\alpha_{\rm late}(-h)$&$\alpha_{\rm early}(h)$&$\alpha_{\rm late}(h)$\\\midrule
0.004&0.4824446511943&0.4827726475859&0.4842720217666&0.4839447670338\\
0.002&0.4829016550323&0.4830655602268&0.4838153395220&0.4836516197416\\
0.001&0.4831301164034&0.4832120457872&0.4835869585487&0.4835050755184\\\bottomrule
\end{tabular}
\end{table}
The corresponding $c_2(h)$ are $-0.0231768364$, $-0.0231767629$, and $-0.0231767446$; using the last pair $h=0.002,0.001$ and $c_2^{\rm Rich}(h)=[4c_2(h/2)-c_2(h)]/3$, Richardson extrapolation gives $-0.0231767385$. Representative archived tangent-window and adjoint-convolution values are shown in Table~\ref{tab:adjoint}; the complete archived pairs are supplied electronically.
\begin{table}[!htbp]
\centering\small
\caption{Archived production cross-check: direct-tangent and adjoint-convolution response values. The original perturbation-window shape, width, and normalization were not retained; the centers identify the archived comparison rows.}
\label{tab:adjoint}
\begin{tabular}{r cc}\toprule
archived window center $\hat v_c$&adjoint&direct tangent\\\midrule
$-40$&0.0016374022&0.0016374023\\
$-30$&0.0082169734&0.0082169733\\
$-20$&0.0162705878&0.0162705877\\
$-10$&0.0107509541&0.0107509542\\\bottomrule
\end{tabular}
\end{table}
The displayed rounded values differ in their final quoted decimal place; full-precision archived values are supplied electronically. These archived pairs record agreement between the production tangent and adjoint implementations at the time of calculation. Because the original driver and perturbation-window definition were not retained, the comparison is not independently reproducible and does not bound the production integrator error.

The Schwarzschild-background kernel peaks at $\hat v=-19.82$, $r=4.26M_0$, with $\widehat K_{45}^{\max}\simeq0.01630$, while the near-$3M_0$ dwell occurs near $\hat v=-54$. Linear interpolation between the supplied kernel samples gives the approximate half-maximum support $-30\lesssim\hat v\lesssim-8$. At $\epsilon=0.05$, the archived early and late peak values are approximately $0.01621$ and $0.01642$. Figure~\ref{fig:kernel} displays the archived kernel summaries.

\begin{figure}[!htbp]
\centering
\includegraphics[width=.9\linewidth]{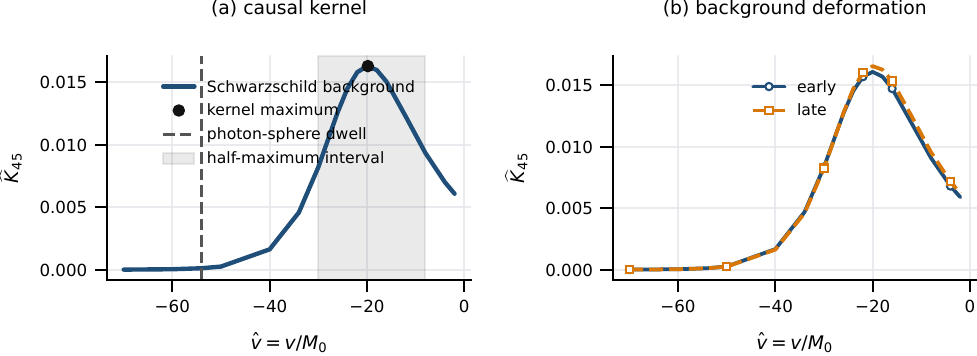}
\caption{Dimensionless mass-history kernel $\widehat K_{45}$ at $L=45M_0$, $r_o=10M_0$, and $r_{\rm in}=2.3M_0$. Panel (a) shows the Schwarzschild background, kernel maximum, photon-sphere dwell marker, and half-maximum interval. Panel (b) compares the $\epsilon=0.05$ early and late backgrounds using distinct line styles and markers.}
\label{fig:kernel}
\end{figure}
A second observer radius gives $c_1(15M_0)=0.0231730>0$, a qualitative robustness check rather than a scaling law.

\section{Discussion}
The endpoint boundary is an observer-dependent decision boundary on the celestial sphere. It is not a photon surface, which is a spacetime-geometric object, and it is not automatically a radiative image edge. Horizon-scale images additionally require an emissivity and absorption model, frequency shifts, propagation of specific intensity, and detector modeling \citep{EHT2019,EHT2022}; none of those ingredients is included here.

The unresolved outcome is essential. Failure to reach the inward terminal before a finite cutoff does not establish asymptotic escape, especially near unstable trapping where the decision time grows rapidly. Retaining an unresolved class prevents a finite computational experiment from being interpreted as an asymptotic classification.

The observable is protocol dependent. It depends on the observer, depth congruence, terminal set, cutoff, and profile family. Within that protocol, the Schwarzschild benchmark and Vaidya calculations separate two mechanisms. Local orbit instability controls the exponential convergence with cutoff, while the change between equal-local-jet histories is transported globally along the null ray and its adjoint. The causal kernel makes that transport explicit: the response is broad in advanced time, its maximum is displaced from the near-photon-sphere dwell region, and finite backgrounds deform the kernel even where the instantaneous mass agrees.

The calculation is restricted by spherical symmetry and by the use of one timelike depth congruence. It does not establish invariance under changes of the depth field, nor does it include radiative transfer. Extending the endpoint and adjoint framework to rotating backgrounds, nonspherical terminal maps, or numerically generated spacetimes would require new calculations, but no change in the logical structure of the finite-depth classification.

\section{Conclusions}
We defined a finite-depth endpoint map for observer-normalized null directions with an explicit unresolved outcome. Under common-flow, transversality, separation, relative-closedness, and strict-cutoff hypotheses, regular first-hit and unresolved decision sets are open and the endpoint label is locally constant.

In Schwarzschild spacetime, the finite-depth capture circle converges to the standard critical circle as $e^{-L/(3M)}$ in static-observer optical depth. In ingoing Vaidya spacetime, two $C^4$ mass histories that coincide near observation nevertheless yield different finite-depth critical angles. For the baseline protocol,
\[
\Delta_{\rm hist}\alpha_{45}=0.0819062\epsilon-0.0231767\epsilon^2+O(\epsilon^3).
\]
The backward adjoint represents the linear mass-history response by a causal kernel and shows that the finite-depth critical angle is a nonlocal functional of the sampled history. The independently refined second-order system and centered nonlinear Richardson check support the quoted quadratic coefficient.

These conclusions apply to the specified finite-depth endpoint protocol. They do not identify a radiative image edge, establish depth-congruence invariance, or remove the limitations of spherical symmetry.

\section*{Statements and Declarations}
\paragraph{Funding.} No funds, grants, or other support were received for conducting this study or preparing the manuscript.

\paragraph{Competing interests.} The authors have no relevant financial or non-financial interests to disclose.

\paragraph{Author contributions.} Jorge Luis Palamara and Jorge R. Palamara jointly developed the study, analyzed the results, and revised the manuscript. Jorge Luis Palamara developed the analytical and numerical methods, performed the calculations, and prepared the original draft. Both authors reviewed and approved the final manuscript.

\paragraph{Use of artificial intelligence and generative tools.} During the preparation of this manuscript, the authors used AI-assisted language tools for language refinement, structural editing, \LaTeX{} formatting, and code-syntax checking. The scientific framework, mathematical derivations, numerical calculations, data, and physical conclusions were developed and verified by the authors. All AI-assisted output was reviewed by the authors, who take full responsibility for the accuracy and integrity of the final manuscript.

\paragraph{Data and code availability.} Supplementary Information version 3.7 contains the retained cutoff, amplitude, kernel, tangent--adjoint comparison, and refinement data; executable scripts for the categorical scan, independent radial critical-angle check, cutoff-rate estimator, and figure generation; an environment manifest; and a documented build workflow. The original production tangent--adjoint driver was not retained, so the archived kernel and tangent--adjoint values are supplied as numerical data rather than as fully executable first-principles reproduction products. The code and data are released under the MIT License.

\appendix
\section{Numerical Controls}
Table~\ref{tab:controls} separates the settings that are directly recoverable from the retained scripts from the archived comparison products. The categorical scan and independent nonlinear check are executable from the Supplementary Information. The retained archive contains the tangent--adjoint comparison pairs and kernel samples, but not the original production tangent/adjoint driver; consequently the comparison in Table~\ref{tab:adjoint} should not be read as an independent bound on the production integrator truncation error.

\begin{table}[!htbp]
\centering\small
\caption{Documented numerical controls. A dash denotes a setting that is not applicable or is not recoverable from the retained executable archive.}
\label{tab:controls}
\begin{tabularx}{\linewidth}{>{\raggedright\arraybackslash}X p{0.14\linewidth} p{0.20\linewidth} >{\raggedright\arraybackslash}X}\toprule
calculation&method&tolerances&sampling, step, or event control\\\midrule
categorical reference scan&DOP853&rtol $2\times10^{-9}$; atol $2\times10^{-11}$&maximum affine step $0.15$; terminal/cutoff events; 257 angles\\
independent radial critical-angle check&DOP853&rtol $8\times10^{-12}$; atol $8\times10^{-14}$&maximum $\Delta r=0.006M_0$; secant tolerance $10^{-13}$ rad\\
tangent--adjoint comparison&archived pair table&--&12,001 sampled nodes\\
second-order radial flows S1--S3&DOP853&$(10^{-9},10^{-11})$, $(10^{-10},10^{-12})$, $(10^{-11},10^{-13})$&maximum $\Delta r/M_0=0.01,0.005,0.0025$\\\bottomrule
\end{tabularx}
\end{table}

The second-order settings S1--S3 produced $c_2=-0.0231767377262$, $-0.0231767378857$, and $-0.0231767378799$, respectively. Supplementary Information version 3.7 includes branch diagnostics, cutoff tables, archived kernel samples, direct tangent--adjoint comparison pairs, and the independent centered-difference data. A separate kernel node-refinement claim is not made because the retained archive does not contain a 6,001/12,001/24,001-node sequence.

\clearpage
\bibliographystyle{unsrtnat}
\bibliography{references}
\end{document}